\documentclass[11pt]{article}
\usepackage{amsmath,amsfonts,amssymb,amsthm}
\usepackage{graphicx}
\usepackage{tikz}
\usepackage{enumerate}
\usepackage[hyphens]{url}
\usepackage{xcolor}
\usepackage{url}
\usepackage{caption}
\usepackage{subcaption}
\usepackage{tcolorbox}
\usepackage{hyperref}
\usepackage{chemarr}
\usepackage[margin=3cm]{geometry}
\usepackage{capt-of}
\usepackage{authblk}
\usepackage{booktabs}
\usepackage{placeins}
\usepackage{comment}
\usetikzlibrary{arrows.meta, automata, positioning} 

\theoremstyle{plain}
\newtheorem{theorem}{Theorem}

\theoremstyle{definition}
\newtheorem{definition}[theorem]{Definition}

\definecolor{darkgreen}{rgb}{0.0,0.7,0.0}

\usepackage{tikz}
\usetikzlibrary{patterns}
\usetikzlibrary{matrix, arrows, decorations.pathmorphing}
\usepackage{tikz-cd}
\tikzset{commutative diagrams/.cd}
\usepackage{framed,lipsum}
\usepackage{float}
\usepackage{pgfplots}
\usetikzlibrary{patterns}
\usetikzlibrary{intersections}
\usetikzlibrary{positioning, shadows, arrows}
\tikzstyle{every node}=[anchor=west, minimum height=3em]
\usetikzlibrary{decorations.pathmorphing}
\usepackage{enumitem}

\begin{document}


\title{ Mean-Field and Pairwise Approaches for the SIRI Model on Poisson Networks}

\author[1]{Akshara Bhat}
\author[1]{Abhishek Deshpande}
\author[1,2]{Chittaranjan Hens}
\author[3]{Subrata Ghosh}
\affil[1]{\small Center for Computational Natural Sciences and Bioinformatics, International Institute of Information Technology, Hyderabad} 
\affil[2]{\small Biomedical research center, International Institute of Information Technology, Hyderabad} 
\affil[3]{Division of Dynamics, Lodz University of Technology, Stefanowskiego 1/15, Lodz 90-924, Lodz, Poland}

\maketitle

\begin{abstract}       
Compartmental epidemic models, grounded in mass-action kinetics, often assume homogeneous mixing. Although this neglects network structure, recent results show that for Poisson random graphs, the classical SIR model, especially the susceptible decay curve, matches the susceptible decay dynamics of its network counterpart. Motivated by this, we investigate whether the extended SIRI model with relapse from the recovered class admits a similar correspondence.
SIRI dynamics arise in sevaral scenarios like spread of diseases with reactivation and behavioral contagion with relapse.  We derive parameter relationships under which the pairwise SIRI model on a Poisson network closely follows the mass-action ODE trajectories. When transmission per contact is small relative to recovery, the susceptible and infectious trajectories of both systems align. This establishes conditions under which nonlinear SIRI dynamics on networks can be effectively approximated by tractable mean-field equations.

\end{abstract}

\textbf{Keywords: Reaction Networks, SIRI model, Pairwise model, Poisson Network}

\maketitle 
    
\section{Introduction}
The modeling of infectious diseases is fundamental for understanding and mitigating epidemics. Classical compartmental models \cite{Brauer2008}, such as the Susceptible-Infected-Recovered (SIR) model \cite{Kermack1927}, \cite{Hethcote1989} and its extensions \cite{collapsetransition}, \cite{ghosh2021optimal}, have long been employed to describe the transmission dynamics of infectious diseases (e.g., \cite{AlQadi2022}). These models classify individuals into distinct compartments, namely Susceptible, Infected and Recovered and use mean-field differential equations to describe the rates of transition between these states.

At a fundamental level, the mathematical modeling of population dynamics, including infectious diseases, often employs principles from chemical reaction kinetics, specifically the law of mass action. This framework posits that the rate of a transition (or ``reaction") between population states is proportional to the product of the concentrations (or densities) of the interacting entities   \cite{keeling2008modeling,eames2002modeling}. Treating the population as a well-mixed chemical reactor, mass action kinetics provides the mathematical basis for the mean-field differential equations used in compartmental epidemiology. For instance, a unimolecular transition, $A \xrightarrow{\lambda} B$, yields the rate $\frac{d[A]}{dt} = -\lambda [A]$ \cite{ross1916application,de1995does}. Crucially, the fundamental component of disease spread, the infection event, is modeled as a bimolecular reaction, $S + I \xrightarrow{\beta} 2I$, where the infection rate($\beta SI$) is proportional to the product of Susceptible ($S$) and Infected ($I$) individuals, \cite{kermack1927contribution, hens2019spatiotemporal,ji2023signal}. The collection of these processes forms a reaction network, and these resulting mass action models form the mathematical bedrock for classical compartmental models like the SIR model.

Despite their utility, traditional compartmental models assume homogenous mixing, i.e., each individual has an equal probability of interacting with every other individual. This assumption often fails to capture the complexity of real-world social networks, where interactions are structured and heterogeneous. To address this limitation, network-based approaches have been developed, providing a more accurate representation of population structure \cite{pastor2015epidemic}. Many models of the transmission dynamics of infectious diseases (e.g., \cite{khudabukhsh2022projecting}, \cite{jacobsen2018large}, \cite{kiss2017mathematics}, \cite{gross2006epidemic}, \cite{ball2019stochastic}, \cite{risau2009contact}) represent contacts as a random graph of
$N$ individuals (nodes) formed using the configuration model \cite{molloy1995critical}. Generally the node degrees are assumed to be independent and identically distributed, such as in the Newman–Strogatz–Watts (NSW) random graph construction~\cite{newman2001random}. This interdisciplinary approach
helps in considering the nuances of real-world interactions and devising more realistic and impactful public health interventions \cite{iannelli2017effective}, \cite{roy2024impact}.

It has recently been shown in~\cite{rempala2023equivalence} that in the special case of a Poisson degree distribution, the deterministic SIR dynamics on a configuration model network is equivalent (in terms of the susceptible decay curve) to the classical mass-action SIR model. Specifically, the ``epidemic curve" describing the rate of decline in susceptibles is identical for both models when the transmission, recovery, and mean degree parameters are appropriately matched. While the infection prevalence curves may differ for small mean degrees, they converge and become practically indistinguishable for sufficiently large mean degrees. This equivalence helps explain the robustness of the mass-action formulation in approximating epidemic dynamics, even in structured populations, and provides theoretical support for using homogeneous-mixing models as surrogates for certain network-based processes.

The SIRI model is an extension of the SIR model that accounts for relapse of an infection in a recovered individual \cite{tudor1990deterministic}, \cite{moreira1997global}. This happens when an infected individual goes into a recovered state and subsequently experiences a reactivation of the infection. This recurrence of infection is observed in diseases such as bovine tuberculosis and human herpes \cite{tudor1990deterministic}, \cite{blower2004modelling} and can be modeled by SIRI epidemiological systems. Moreover, SIRI also shows promise in modeling the dynamics of tobacco and alcohol use as a contagious disease with relapse, as proposed in \cite{castillo1997mathematical} and \cite{sanchez2007drinking} respectively. The mean-field differential equations for this model (assuming constant population and no demographic changes) are as follows (\cite{tudor1990deterministic}, \cite{vandenDriessche2007}),
\let\saveeqnno\theequation
\let\savefrac\frac
\def\dispfrac{\displaystyle\savefrac}
\begin{eqnarray}
\let\frac\dispfrac
\gdef\theequation{1}
\let\theHequation\theequation
\label{eq:siri_massaction}
\begin{array}{@{}l}\begin{array}{l}dS/dt\;=\;-\beta SI\\dI/dt\;=\;\beta SI\;-\;\gamma I\;+\;\alpha R\\dR/dt\;=\;\gamma I\;-\;\alpha R\end{array}\end{array}
\end{eqnarray}
\global\let\theequation\saveeqnno
\addtocounter{equation}{-1}\ignorespaces 



\begin{figure}[ht]
  \centering
  \begin{tikzpicture}[
      ->, 
      >=Stealth, 
      node distance=2.5cm, 
      thick,
      every state/.style={draw, circle, minimum size=1cm}
    ]
    \node[state] (S) {S};
    \node[state] (I) [right=of S] {I};
    \node[state] (R) [below=of I] {R};

    \draw (S) to[bend left] node[above] {$\beta SI$} (I);
    \draw (I) to[bend left] node[right] {$\gamma I$} (R);
    
    \draw (R) to[bend left] node[left] {$\alpha R$} (I); 
    
  \end{tikzpicture}
  \caption{Transfer diagram for the SIRI ODE model in equations~\eqref{eq:siri_massaction}.}
  \label{fig:siri_diagram}
\end{figure}
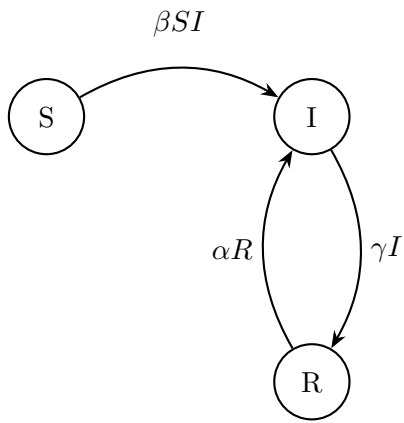

\begin{definition}[Model Parameters]
The constant parameters in the model are defined as follows:
\begin{itemize}
    \item $\beta$: The disease transmission coefficient, representing the rate of infection.
    \item $\gamma$: The recovery rate, representing the transition from infected to recovered.
    \item $\alpha$: The reinfection rate, indicating the speed at which immunity is lost.
\end{itemize}
\end{definition}
No birth and death rates are considered here.

While this model offers a more comprehensive framework for understanding disease dynamics, it still relies on the assumption of homogeneous mixing. Motivated by the proven equivalence between the classical SIR mass-action model and the SIR model on Poisson networks( to be introduced in the next section) \cite{rempala2023equivalence}, we investigate whether a similar correspondence can be established for the SIRI model, which includes relapse dynamics. Network-based SIRI models, such as the pairwise formulation, capture heterogeneous contact structures but are analytically and computationally more complex. By deriving parameter mappings between the pairwise SIRI equations on a Poisson network and the homogeneous-mixing SIRI ODEs, we aim to show that the susceptible and infectious trajectories of the network model can be closely approximated by the mass-action system. This approximation allows us to leverage the tractability of ODE-based methods while retaining high fidelity to the network-based dynamics, especially when the per-contact transmission rate is small relative to the recovery rate.

\subsection*{Network epidemic models}
We consider a stochastic SIRI (Susceptible{\textendash}Infectious{\textendash}Recovered{\textendash}Infectious) epidemic process evolving on a network of size $N$. Initially, $m$ individuals are selected uniformly at random from the population to be infectious. Each infectious individual remains infectious for a period drawn from an exponential distribution with rate \ensuremath{\tilde{\gamma} }. During this time, they contact their immediate neighbors. If a contacted neighbor is in the susceptible state at the moment of contact, they become infectious according to a poisson process with rate \ensuremath{\tilde{\beta} }. Once the infectious period ends, the individual enters the recovered state. Recovered individuals do not remain permanently immune; instead, they may relapse into the infectious state at a rate \ensuremath{\tilde{\alpha} }, with the time to relapse also modeled as an exponentially distributed random variable. All infection, recovery, and relapse processes are independent.

The epidemics unfolds on a random network generated via the configuration model. The network is built by drawing degree values for each node from a Poisson distribution and then joining the resulting half-edges, or stubs, at random to establish the edges. This results in a random graph with a Poisson degree distribution, capturing the contact structure of real populations while maintaining analytical tractability. In the following section, we describe the pairwise model, a mean-field model of the epidemics on a network.

\section{Pairwise model}
Approximating stochastic epidemics on networks is an important problem that has received significant attention, leading to the development of several mean-field models that are represented in terms of systems of ordinary differential equations. Notable among these models are the pairwise model \cite{rand1999correlation}, the Volz model \cite{volz2008sir}, and the dynamical survival analysis (DSA) model\cite{jacobsen2018large}, \cite{khudabukhsh2020survival}.
The pairwise model is based on a set of equations for the expected
number of susceptible ([$S$]) and infected ([$I$]) nodes, as well as the expected number of S–I
([$SI$]) and S–S ([$SS$]) pairs. It relies on a closure argument that approximates the expected number
of triples in terms of singles and pairs, which breaks the dependence on higher
moments. The pairwise model equations proposed by {\cite{rand1999correlation,keeling1997correlation}} are given by
\setcounter{equation}{1}
\begin{equation}\label{eq:pairwise_original_rand}
\begin{aligned}
\dot{[S]} &= -\tilde{\beta} [SI], \\
\dot{[I]} &= \tilde{\beta} [SI] - \tilde{\gamma} [I], \\
\dot{[R]} &= \tilde{\gamma} [I],\\
\dot{[SI]} &= -\tilde{\gamma} [SI] + \tilde{\beta}([SSI] - [ISI]) - \tilde{\beta} [SI], \\
\dot{[SS]} &= -2\tilde{\beta} [SSI], \\ 
\end{aligned}
\end{equation}
In a similar way, the pairwise model for SIRI can be captured by:
\begin{equation}\label{eq:pairwise_original}
\begin{aligned}
\dot{[S]} &= -\tilde{\beta} [SI],\\
\dot{[I]} &= \tilde{\beta} [SI] - \tilde{\gamma} [I] + \tilde{\alpha}[R],\\
\dot{[R]} &= \tilde{\gamma} [I] - \tilde{\alpha}[R],\\
\dot{[SI]} &= -\tilde{\gamma} [SI] + \tilde{\beta}([SSI] - [ISI]) - \tilde{\beta} [SI] + \tilde{\alpha}[SR],\\
\dot{[SS]} &= -2\tilde{\beta} [SSI],\\
\dot{[SR]} &= \tilde{\gamma}[SI] - \tilde{\alpha}[SR] - \tilde{\beta}[ISR].
\end{aligned}
\end{equation}
we define variables of the form 
\([X], [XY], [XYZ]\) below, 
\begin{definition}[Pairwise Variables]
Let the population be represented by an undirected graph $G$. We denote the expected counts of nodes and edges in specific states using the following notation:
\begin{itemize}
    \item $[X]$ denotes the expected number of nodes in state $X \in \{S, I, R\}$.
    \item $[XY]$ denotes the expected number of ordered edges $(u, v)$ such that node $u$ is in state $X$ and node $v$ is in state $Y$. Note that $[XX]$ is twice the number of undirected edges connecting two nodes in state $X$.
    \item $[XYZ]$ denotes the expected number of ordered triples of nodes $(u, v, w)$ connected as $u-v-w$ in states $X, Y$, and $Z$ respectively.
\end{itemize}
\end{definition} 
These are defined over the entire network such that each configuration is counted in all its possible permutations.
Here, we account for reinfection with an additional parameter $\tilde{\alpha}$, 
which is the probability with which a recovered individual experiences a relapse of the infection. This introduces a new term for the number of susceptible-recovered edges in $\dot{[SI]}$ dynamics. We gain [$SR$] pairs from [$SI$] pairs at the rate $\tilde{\gamma}$ as individuals recover and lose [$SR$] pairs as recovered individuals are reinfected or susceptible individuals are infected.

These equations do not include differential equations for triples that are involved in the calculation of pairs. The equations for triples will involve quadruples, and quadruples will involve pentuples, and so on, until the maximum possible degree. However, we apply moment closure on triples to express them in terms of pairs and ``close" the model.\\
\subsection{Poisson Networks}

We consider the special case of Poisson networks for the pairwise model. Triplets of the form $[XYZ]$ in Poisson type networks can be closed as, $[XYZ] \simeq \kappa\frac{[XY][YZ]}{[Y]}$ where $\kappa$ is the ratio of mean excess degree to mean degree of the network \cite{kiss2022necessary}. Given a random graph with degree distribution $p_k$ and finite average degree $\mu$ the excess degree distribution 
$q_k$ 
is defined as the degree of a random neighbor node of a randomly selected node of degree at least one. That is, 
$q_k = (k + 1)p_{k+1} /\mu$. 
On a Poisson network, 
$\kappa=1$ 
because the probability mass function for the Poisson distribution is stable under the transformation to the excess degree distribution.\\ 
\subsection{Normalized variables and useful ratios}
We introduce the normalized variables $x_S=\frac{[S]}{[N]}, x_I=\frac{[I]}{N}, x_R=\frac{[R]}{N}$ and variables $x_{SI}=\frac{[SI]}{N}, x_{SR}=\frac{[SR]}{N}, x_{SS}=\frac{[SS]}{N} $.
\begin{definition}[Auxiliary Ratios]
To simplify the pairwise system, we define the infectious-neighbor density $x_D$ and the recovered-neighbor density $x_A$ as:
$$x_D = \frac{x_{SI}}{x_S}, \quad x_A = \frac{x_{SR}}{x_S}$$.
\end{definition}
Using the closure with $\kappa=1$ and the factorization of $x_{SS}$ in terms of $x_S$, we will show that one can reach eqns.\ \eqref{eq:xdot_s} to \eqref{eq:xdot_a} from eqn.\ \eqref{eq:pairwise_original}.
\subsection*{Closed system on a Poisson network}
\begin{align}
\dot{x}_S &= -\tilde{\beta}\,x_D\,x_S, \label{eq:xdot_s}\\
\dot{x}_D &= \tilde{\beta}\,\mu\,x_S\,x_D - (\tilde{\beta}+\tilde{\gamma})\,x_D + \tilde{\alpha}\,x_A, \label{eq:xdot_d}\\
\dot{x}_I &= \tilde{\beta}\,x_D\,x_S - \tilde{\gamma}\,x_I + \tilde{\alpha}\,x_R, \label{eq:xdot_i}\\
\dot{x}_R &= \tilde{\gamma}\,x_I - \tilde{\alpha}\,x_R, \label{eq:xdot_r}\\
\dot{x}_A &= \tilde{\gamma}\,x_D - \tilde{\alpha}\,x_A. \label{eq:xdot_a}
\end{align}

\paragraph{Detailed derivation.}
We derive eqns.\ \eqref{eq:xdot_s}-\eqref{eq:xdot_a} from eqn.\eqref{eq:pairwise_original} step by step. 
Throughout, write densities $x_Z=[Z]/N$ and $x_{YZ}=[YZ]/N$.
We also use the standard triplet closure on configuration-model networks,
\begin{equation}
x_{XYZ}\;=\;\kappa\,\frac{x_{XY}\,x_{YZ}}{x_Y},
\qquad 
\kappa \;=\; \frac{\text{mean excess degree}}{\text{mean degree}},
\label{eq:closure-kappa}
\end{equation}
and, for Poisson degree, $\kappa=1$. Finally, on a Poisson random network, the pair density $x_{SS}$ can be factorized in terms of $x_S$. More precisely, one can show that (see \cite{kiss2022necessary}, Sec.~4):
\begin{equation}
x_{SS}\;=\;\mu\,x_S^2,
\label{eq:poisson-SS}
\end{equation}

\smallskip
In what follows, we derive the equations \eqref{eq:xdot_s}, \eqref{eq:xdot_i}, \eqref{eq:xdot_r}, \eqref{eq:xdot_d}, \eqref{eq:xdot_a} in this order. \\
Dividing first to third equations of eqn.\ \eqref{eq:pairwise_original} by $N$:
\[
\dot x_S \;=\; -\tilde\beta\,x_{SI},\qquad
\dot x_I \;=\; \tilde\beta\,x_{SI}-\tilde\gamma\,x_I+\tilde\alpha\,x_R,\qquad
\dot x_R \;=\; \tilde\gamma\,x_I-\tilde\alpha\,x_R.
\]
Using $x_{SI}=x_D\,x_S$ immediately gives
\begin{equation}
\begin{split}
\dot x_S &= -\tilde\beta\,x_D\,x_S, \\
\dot x_I &= \tilde\beta\,x_Dx_S-\tilde\gamma\,x_I+\tilde\alpha\,x_R, \\
\dot x_R &= \tilde\gamma\,x_I-\tilde\alpha\,x_R.
\end{split}
\label{eq:xdotS-deriv}
\end{equation}

This establishes equation \eqref{eq:xdot_s}, \eqref{eq:xdot_i} and \eqref{eq:xdot_r}. We can obtain the right–hand sides of  eqns. \eqref{eq:xdot_d} and \eqref{eq:xdot_a} once we identify triplet closures for pairs.

\begin{itemize}
\item \textbf{Triplet closures needed for pairs.}
Applying eqn.\ \eqref{eq:closure-kappa} with $\kappa=1$ and using eqn.\eqref{eq:poisson-SS} and $x_{SI}=x_Dx_S$, $x_{SR}=x_Ax_S$ gives
\begin{equation}
\begin{aligned}
x_{SSI} &\approx \frac{x_{SS}\,x_{SI}}{x_S}
= \mu x_S^2\cdot\frac{x_Dx_S}{x_S}
= \mu\,x_S^2\,x_D, \\[6pt]
x_{ISI} &\approx \frac{x_{SI}^2}{x_S}
= \frac{(x_Dx_S)^2}{x_S}
= x_D^2\,x_S, \\[6pt]
x_{ISR} &\approx \frac{x_{SI}x_{SR}}{x_S}
= x_D\,x_A\,x_S.
\end{aligned}
\label{eq:all-triplets}
\end{equation}
\end{itemize}

We will now derive equations for \eqref{eq:xdot_d} (equation for $x_D$) and \eqref{eq:xdot_a} (equation for $x_A$).
\begin{itemize}
\item \textbf{Equation for $x_D$.}
Start from the equation for $\dot{[SI]}$ in eqn.\eqref{eq:pairwise_original}, divide by $N$, and substitute eqn.\ \eqref{eq:all-triplets}:
\[
\dot x_{SI} \;=\; -\tilde\gamma\,x_{SI}+\tilde\beta\,(x_{SSI}-x_{ISI})-\tilde\beta\,x_{SI}+\tilde\alpha\,x_{SR}
\;=\;-(\tilde\beta+\tilde\gamma)\,x_{SI}+\tilde\beta\big(\mu x_S^2x_D-x_D^2x_S\big)+\tilde\alpha\,x_{SR}.
\]
Because $x_D=x_{SI}/x_S$, the quotient rule yields
\[
\dot x_D \;=\; \frac{\dot x_{SI}}{x_S}-x_D\,\frac{\dot x_S}{x_S}.
\]
Insert $\dot x_{SI}$ from above and $\dot x_S$ from eqn.\ \eqref{eq:xdotS-deriv}:
\begin{align*}
\dot x_D 
&= \frac{-(\tilde\beta+\tilde\gamma)\,x_{SI}+\tilde\beta(\mu x_S^2x_D-x_D^2x_S)+\tilde\alpha\,x_{SR}}{x_S}
  \;-\;x_D\Big(\frac{-\tilde\beta x_D x_S}{x_S}\Big),\\
&= -(\tilde\beta+\tilde\gamma)\,x_D+\tilde\beta(\mu x_Sx_D - x_D^2)+\tilde\alpha\,x_A \;+\;\tilde\beta x_D^2, \\
&= \tilde\beta\,\mu\,x_S\,x_D \;-\;(\tilde\beta+\tilde\gamma)\,x_D \;+\;\tilde\alpha\,x_A.
\end{align*}
which is exactly eqn.\ \eqref{eq:xdot_d}.
 
\item \textbf{Equation for $x_A$ .}
From eqn.\ \eqref{eq:pairwise_original}, (divide by $N$ and use eqn.\ \eqref{eq:all-triplets}):
\[
\dot x_{SR} \;=\; \tilde\gamma\,x_{SI}-\tilde\alpha\,x_{SR}-\tilde\beta\,x_{ISR}
\;=\; \tilde\gamma\,x_Dx_S - \tilde\alpha\,x_Ax_S - \tilde\beta\,x_Dx_Ax_S.
\]
Using $x_A=x_{SR}/x_S$, quotient rule and equation \eqref{eq:pairwise_original}, we get
\[
\dot x_A
= \frac{\dot x_{SR}}{x_S}-x_A\,\frac{\dot x_S}{x_S}
= \Big(\tilde\gamma x_D - \tilde\alpha x_A - \tilde\beta x_Dx_A\Big) \;-\; x_A\Big(\frac{-\tilde\beta x_D x_S}{x_S}\Big)
= \tilde\gamma\,x_D - \tilde\alpha\,x_A,
\]
which is eqn.\ \eqref{eq:xdot_a}.



\item \textbf{Conservation check.}
Adding eqn.\ \eqref{eq:xdot_s}, eqn.\ \eqref{eq:xdot_i}, eqn.\ \eqref{eq:xdot_r} gives $\dot x_S+\dot x_I+\dot x_R=0$, so $x_S+x_I+x_R\equiv 1$ as required.
\end{itemize}


\subsection*{Initial conditions}
Let the initial infected fraction be $0 < \tilde{\rho} << 1$, with no recovered individuals at $t=0$. For a Poisson network with mean degree $\mu$, the initial conditions are, 
\begin{equation}\label{eq:poisson_initial_conditions}
\begin{aligned}
x_S(0) &= 1, \\
x_I(0) &= \tilde{\rho}, \\
x_R(0) &= 0,\\
x_{SS}(0) &= \mu\,x_S(0)^2 = \mu,\\
x_{SI}(0) &= \mu\,x_S(0)\,x_I(0) = \mu\,\tilde{\rho},\\
x_D(0) &= \frac{x_{SI}(0)}{x_S(0)} = \mu\,\tilde{\rho},\\
x_A(0) &= \frac{x_{SR}(0)}{x_S(0)} = 0.
\end{aligned}
\end{equation}
These choices reflect random, uncorrelated seeding on a Poisson network: pairs factorize as products of node fractions, leading to $x_{SS}(0)=\mu x_S(0)^2$ and $x_{SI}(0)=\mu x_S(0) x_I(0)$.

\medskip
Equations \eqref{eq:xdot_s}-\eqref{eq:xdot_a} together with eqn.\ \eqref{eq:poisson_initial_conditions} form a closed, five-dimensional ODE model for SIRI dynamics on Poisson networks under pair approximation.

\section{Epidemic curve equation}
\begin{definition}[Epidemic Curve]
The \textit{epidemic curve} of a compartmental model is the functional representation of the rate of change of the susceptible population, $\dot{x}_S$, expressed as a fraction of $x_S$ and over time.
\end{definition}
Understanding the evolution of the susceptible population is central to epidemic modeling, and it provides a direct link between network-based models and their mass-action counterparts. One way to capture its evolution is through the \emph{epidemic curve equation}, which characterizes how the density of susceptibles decays over time as infections spread. This approach has been used\cite{rempala2023equivalence} in both classical ODE models and network-based pairwise models to compare their trajectories and derive conditions for equivalence. By analyzing the epidemic curve equations for both mass-action and pairwise systems, we can identify parameter mappings and structural assumptions under which the two formulations predict similar dynamics.\\
An important parameter in epidemiology is the basic reproduction number, $Z_0$ which indicates how many secondary infections are caused by an infectious individual. In the classical SIR model it is given by,
$Z^{MA} = \frac{\beta}{\gamma}$  
A similar parameter can be defined for the network SIR model \cite{rempala2023equivalence},
$Z^{NET} = \frac{{\mu} \tilde{\beta}}{\tilde{\beta} + \tilde{\gamma}}$. These constants are not the basic reproduction numbers for the SIRI model, rather, they only indicate the initial direction of infection curve. Nevertheless they will be useful in the upcoming calculations.\\
By dividing equation ~\eqref{eq:xdot_d} by ~\eqref{eq:xdot_s} and substituting $Z^{NET}$, we get:
\[
\frac{dx_D}{dx_S} = -{\mu} + \frac{{\mu}}{Z^{NET}x_S} - \frac{\tilde{\alpha} x_A}{\tilde{\beta} x_D x_S}.
\]
This differential equation is non-linear and cannot be solved analytically. However, we can still express $x_D$ as,\\
\[
x_D = \mu \int_{x_S(0)}^{x_S} \left( -1 + \frac{1}{Z^{NET}x_S} - \frac{\tilde{\alpha} x_A}{\tilde{\beta}\mu x_D x_S} \right)  dx_S + \mu\tilde{\rho}.
\]
Here the term $\mu \tilde{\rho}$ from the initial condition \eqref{eq:poisson_initial_conditions}, specifically $x_D = \mu \tilde{\rho}$.\\
Substituting into equation ~\eqref{eq:xdot_s}:
\begin{align}
    \dot{x}_S = -\tilde{\beta} {\mu} \left(\int_{x_S(0)}^{x_S} \left( -1 + \frac{1}{Z^{NET}x_S} - \frac{\tilde{\alpha} x_A}{\tilde{\beta}\mu x_D x_S} \right) dx_S + \tilde{\rho}\right) x_S.
    \label{eq:xdot_s_curve}
\end{align}
This equation is called the epidemic curve equation.
For the classical mass-action model, similar manipulation of eqns.\ (1-3) gives us,
\begin{align}
\dot{S} = -{\beta}\left(\int_{S(0)}^{S} \left( -1 + \frac{1}{Z^{MA}S} - \frac{{\alpha} R}{{\beta} I S} \right)  dS + \rho\right) S.
\label{eq:Sdot_curve}
\end{align}


\section{Model approximation}
In this section, we establish the conditions under which the pairwise SIRI model on a Poisson network can be faithfully approximated by the classical mass-action SIRI equations. The key idea is to compare the epidemic curve equations derived from both formulations and to identify parameter transformations that make the susceptible and infectious trajectories coincide. By carefully analyzing the relations between the densities of pairs and nodes, and by imposing suitable assumptions on the relative sizes of transmission and recovery rates, we show that the ratio of auxiliary variables in the pairwise system aligns with the corresponding ratios in the mass-action model. This allows us to derive a rigorous mapping between the two descriptions and to quantify when the simpler ODE formulation provides a valid surrogate for the network-based dynamics.

On comparing both the epidemic curve equations, for eqn.~\eqref{eq:siri_massaction} to approximate eqn. ~\eqref{eq:pairwise_original} we need the initial conditions and reinfection rate to be exactly the same and ${Z^{MA}}$ = $Z^{NET}$ which implies,
\begin{eqnarray}
\begin{split}
{\beta} &= \mu\tilde{\beta},\\
{\gamma} &= \tilde{\gamma},\\
 \alpha &= \tilde{\alpha},\\
\rho &= \tilde{\rho}.\\
\end{split}
\end{eqnarray}

\begin{theorem}[Model Approximation Validity]
The Susceptible curve described by \eqref{eq:siri_massaction} accurately approximates the network-based pairwise model in equation \eqref{eq:pairwise_original}throughout the epidemic if the following conditions hold:
\begin{enumerate}[label=\roman*)]
    \item $\beta = \mu \tilde{\beta}$,
    \item $\gamma = \tilde{\gamma}$,
    \item $\alpha = \tilde{\alpha}$,
    \item $\rho = \tilde{\rho}$,
    \item  $\tilde{\beta} << \tilde{\gamma}$.
\end{enumerate}

\end{theorem}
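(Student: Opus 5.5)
The plan is to compare the two closed systems directly rather than only their epidemic curve equations. With conditions i)–iv), I would rewrite the mass-action system \eqref{eq:siri_massaction} in the same form as the closed Poisson system \eqref{eq:xdot_s}–\eqref{eq:xdot_a}. The natural analogue of $x_D$ is $\mu I$, and the analogue of $x_A$ is $\mu R$.

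\textbf{Step 1: exact match of the $S$ equation given pair densities.} Substituting $x_D=\mu y$ and $x_A=\mu z$, equation \eqref{eq:xdot_s} becomes $\dot x_S=-\mu\tilde\beta\, y\, x_S=-\beta y x_S$, by condition i). Likewise \eqref{eq:xdot_a} becomes $\dot z=\tilde\gamma y-\tilde\alpha z$, which is exactly the mass-action $R$-equation once ii)–iii) hold. Equation \eqref{eq:xdot_d} becomes
\[
\dot y=\beta x_S y-(\tilde\beta+\tilde\gamma) y+\tilde\alpha z .
\]
This is the mass-action $I$-equation except for the extra loss term $-\tilde\beta y$. The initial data also agree: $y(0)=\tilde\rho=\rho$ and $z(0)=0$, by iv) and \eqref{eq:poisson_initial_conditions}.

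It follows that $(x_S,y,z)$ solves a perturbation of \eqref{eq:siri_massaction}. The perturbation is the single term $-\tilde\beta y$ in the infective equation. Equivalently, in the epidemic curve equations \eqref{eq:xdot_s_curve} and \eqref{eq:Sdot_curve}, the network has $Z^{NET}=\mu\tilde\beta/(\tilde\beta+\tilde\gamma)$ where the mass-action model has $Z^{MA}=\beta/\gamma$. The ratio $x_A/x_D$ corresponds to $R/I$.

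\textbf{Step 2: smallness of the perturbation.} Under v), $\tilde\beta+\tilde\gamma=\tilde\gamma(1+\tilde\beta/\tilde\gamma)$, so $Z^{NET}=Z^{MA}\,(1+O(\tilde\beta/\tilde\gamma))$. In the vector field, the difference between the two systems is bounded by $\tilde\beta|y|$. That is a relative error of size $\tilde\beta/\tilde\gamma$ against the recovery term $\tilde\gamma y$.

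\textbf{Step 3: Gronwall estimate.} I would apply a Gronwall argument to the difference $(x_S-S,\;y-I,\;z-R)$. All variables stay in a compact set, because $x_S,z\ge0$ and $x_S\le1$. The system is Lipschitz there with a constant $L$ depending on $\beta,\gamma,\alpha$. Hence on $[0,T]$ the difference is at most $\tilde\beta\sup|y|\,T e^{LT}$. The epidemic time scale is of order $1/\gamma$, so this bound is small relative to the trajectories when $\tilde\beta\ll\tilde\gamma$.

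\textbf{Main obstacle: ``throughout the epidemic''.} The hardest step is making the approximation hold uniformly in time rather than only on finite windows, since the Gronwall constant grows with $T$. I would try to exploit the structure of the dynamics. $x_S$ is monotone decreasing, and on the slow manifold $I+R$ (resp. $y+z$) equilibrates with $R/I\to\gamma/\alpha$. So after the main outbreak both systems approach equilibria of the form $(S^\ast,0,0)$ or endemic-type limits. These limits can be computed from the conserved structure and depend continuously on $Z$. Continuity of the limit points in the parameter $\tilde\beta/\tilde\gamma$, combined with the finite-time Gronwall bound, would then give closeness for all $t$.

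One caveat: $y+z$ is not conserved in the network system, whereas $I+R$ is tied to $1-S$ in the mass-action one. The final argument may therefore need to settle for closeness of the $S$ curve with error $O(\tilde\beta/\tilde\gamma)$, which is what the theorem claims.
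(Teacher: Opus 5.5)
Your Step~1 takes a genuinely different and cleaner route than the paper. The paper never uses the fact that \eqref{eq:xdot_s}, \eqref{eq:xdot_d}, \eqref{eq:xdot_a} form a closed subsystem. Instead it rewrites $x_A/x_D$ and $x_R/x_I$ as in \eqref{eq:Ratio_A_D}--\eqref{eq:Ratio_R_I} and argues $x_D\approx\mu x_I$ from a first-order Taylor expansion at $t=0$. It then identifies $x_R/x_I$ with $R/I$ so that the integrands of \eqref{eq:xdot_s_curve} and \eqref{eq:Sdot_curve} agree. That argument is only local in time. Your rescaling $x_D=\mu y$, $x_A=\mu z$ shows that $(x_S,y,z)$ is exactly mass-action SIRI plus the single loss term $-\tilde\beta y$. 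Gronwall then gives a rigorous $O(\tilde\beta)$ bound on each fixed window $[0,T]$, for $\beta,\gamma,\alpha$ fixed, which already goes beyond what the paper establishes. One repair is needed: $x_S,z\ge 0$ and $x_S\le 1$ do not bound $y$. Use $\frac{d}{dt}(x_S+y+z)=-\tilde\beta y\le 0$ together with invariance of the nonnegative orthant to get $x_S+y+z\le 1+\rho$.

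The gap is the extension to ``throughout the epidemic'', and the same identity shows your proposed mechanism cannot work.
\begin{itemize}
\item \textbf{Network limit.} The identity gives $\tilde\beta\int_0^\infty y\,dt\le 1+\rho$. Hence $y,z\to 0$ and $x_S\to x_S^\infty$, where $x_S^\infty=e^{-\mu(1+\rho-x_S^\infty)}\ge e^{-\mu(1+\rho)}>0$.
\item \textbf{Mass-action limit.} With $\alpha>0$, $S+I+R$ is conserved, so $S\to 0$: if $S_\infty>0$, then $\int_0^\infty I\,dt<\infty$ and $\int_0^\infty R\,dt<\infty$, contradicting $I+R\ge\rho$. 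Then $(I,R)$ tends to an endemic point with $I^\ast>0$.
\item \textbf{Consequence.} The two systems converge to equilibria of different type, and $y-I\not\to 0$ however small $\tilde\beta/\tilde\gamma$ is. The limit point is discontinuous at $\tilde\beta=0$, because the leak $-\tilde\beta y$ is a singular perturbation acting over unbounded time. Continuity of limit points is therefore not available.
\end{itemize}

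Your fallback claim is also false as stated. Since $|x_S(t)-S(t)|\to x_S^\infty$, which depends only on $\mu$ and $\rho$, no bound of order $\tilde\beta/\tilde\gamma$ can hold uniformly in time for fixed $\mu$. The gap is negligible for $\mu=15$, but it is controlled by $\mu$, not by $\tilde\beta/\tilde\gamma$.

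What your ingredients do give is the following. Both susceptible curves are decreasing, so for every $T$ one has $\sup_{t\ge 0}|x_S(t)-S(t)|\le S(T)+C(T)\tilde\beta$, with $C(T)$ your Gronwall constant. Since $S(T)\to 0$, the $S$ curves are uniformly close as $\tilde\beta\to 0$ with $\beta,\gamma,\alpha$ fixed. This is the large-degree regime $\mu=\beta/\tilde\beta\to\infty$, and there is no explicit rate. This monotonicity step, not the equilibrium analysis or the horizon $T\sim 1/\gamma$ of Step~3, is what controls the long tail in which relapse keeps depleting $S$.
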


\textbf{Proof:}First we show that $ x_A/x_D \approx x_R/x_I $. From ~\eqref{eq:xdot_d} we can write,
\[
x_A = \frac{1}{\tilde{\alpha}}[\dot{x}_D + (\tilde{\beta} + \tilde{\gamma} - \tilde{\beta}{\mu} x_S )x_D].
\]
Dividing both sides by $x_D$ we obtain,\\
\begin{align} \label{eq:Ratio_A_D}
    \frac{x_A}{x_D} = \frac{1}{\tilde{\alpha}}[\frac{\dot{x}_D}{x_D} + (\tilde{\beta} + \tilde{\gamma} - \tilde{\beta}{\mu} x_S )].    
\end{align}
and similar manipulation of  Eqn.\ ~\eqref{eq:xdot_i} yields,\\
\begin{align} \label{eq:Ratio_R_I}
    \frac{x_R}{x_I} = \frac{1}{{\tilde{\alpha}}}[\frac{\dot{x_I}}{x_I} + (\tilde{\gamma} - \tilde{\beta} x_S x_D / x_I)].    
\end{align}
Comparing equations~\eqref{eq:Ratio_A_D} and ~\eqref{eq:Ratio_R_I}, we see that if, \\
\[
\frac{\dot{x_D}}{x_D} = \frac{\dot{x_I}}{x_I},
\qquad
\tilde{\beta} << \tilde{\gamma},
\qquad
x_D / x_I = \mu.
\]
the equations \eqref{eq:Ratio_A_D} and \eqref{eq:Ratio_R_I} match. We will show it for small enough \textit{t} using Taylor series expansion.

From Eqn.\eqref{eq:poisson_initial_conditions}, we can write the initial state of $x_D$ as, $x_D(0) = \mu\tilde{\rho} = \mu x_I(0)$. We can also write from eqn.\ ~\eqref{eq:xdot_d},\\
\[
\dot{x}_D(0) = \tilde{\beta}{\mu} x_D(0) - (\tilde{\beta} + \tilde{\gamma})x_D(0),
\]
\[
\dot{x}_D(0) = \tilde{\beta}{\mu^2}  x_I(0) - (\tilde{\beta} + \tilde{\gamma})\mu x_I(0),
\]
\[
\dot{x}_D(0) = \mu \dot{x}_I(0).
\]
For small enough \textit{t}, using Taylor series expansion we have,\\
\[
x_D(t) \approx x_D(0) + \dot{x}_D(0)t,
\]
Thus, we can write 
$
x_D(0) + \dot{x}_D(0)t = \mu x_I(0) + \mu \dot{x}_I(0) t.
$
As, 
 $x_I(0) + \dot{x}_I(0) t$ is the Taylor series approximation of $x_I(t)$, we can further write \\
$
x_D(0) + \dot{x}_D(0)t \approx  \mu {x}_I(t),
$
and approximately, one can write
$
x_D(t) \approx \mu {x_I}(t).
$
Thus,
\[
    \frac{\dot{x}_D}{x_D} = \frac{\dot{x}_I}{x_I}
    \label{eq:xd_dot/xd=xi_dot/xi},
    \qquad
    x_D / x_I = \mu,
\]
Substituting these results into the Eqns.\ \eqref{eq:Ratio_A_D} and ~\eqref{eq:Ratio_R_I} we obtain,\\
\begin{align}
    \frac{x_A}{x_D} = \frac{x_R}{x_I}.
\end{align}

Now it remains to show that $x_R/x_I = R/I$. Again since $\tilde{\beta} << \tilde{\gamma}$, we get $\gamma \approx \tilde{\gamma}$. Together with the conditions $\tilde{\beta} = \beta$ and $\tilde{\alpha} = \alpha$, we can write
$
\dot{x}_I = {\beta} x_{D}x_S - {\gamma} x_I + {\alpha} x_R,
$
and 
$
    \dot{x}_R = {{\gamma}} x_I - {\alpha} x_R.
$
Therefore, from Eqns.\ \eqref{eq:siri_massaction},  ~\eqref{eq:xdot_i}, and ~\eqref{eq:xdot_r}, we get $x_I = I, x_R = R $.
Substituting these into the Eqn.  ~\eqref{eq:xd_dot/xd=xi_dot/xi} we obtain,
$
\frac{x_A}{x_D} = \frac{R}{I}.
$
This leads us to,



\section{Evaluation and Discussion}

Figures~\ref{fig:full_comparison} shows the time series for both susceptible and infectious fractions for the four parameter sets. Each panel pairs network (solid) and ODE (dashed) curves. Despite relying on a first-order Taylor expansion in our theoretical derivation, the numerical results confirm that the theorem conditions lead to excellent agreement between the mass-action ODE and the network-based dynamics. 

\FloatBarrier 

\begin{figure}[htbp] 
    \centering
    \begin{minipage}[t]{0.49\linewidth} 
        \centering
        \includegraphics[width=\linewidth, keepaspectratio]{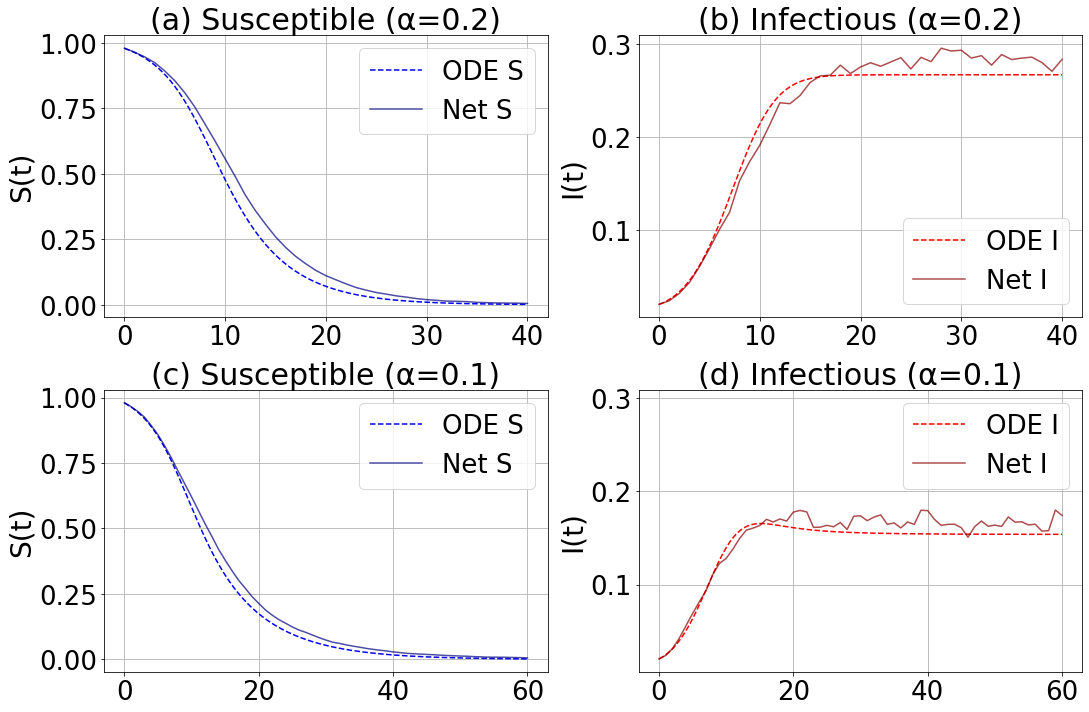}
    \end{minipage}
    \hfill
    \begin{minipage}[t]{0.49\linewidth} 
        \centering
        \includegraphics[width=\linewidth, keepaspectratio]{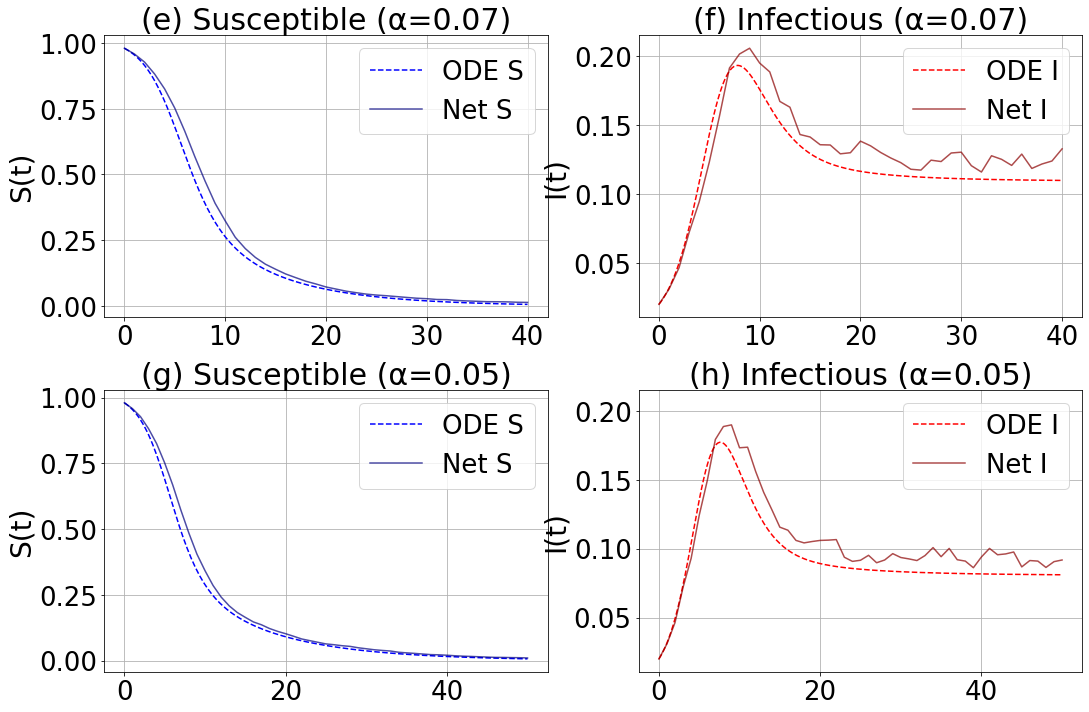}
    \end{minipage}

    \caption{Temporal evolution of the epidemic dynamics in a Poisson network (solid) compared to the corresponding mass-action model ODEs (dashed). The plots are arranged in a 2$\times$2 grid (S and I fractions for two parameter sets) within each image. The common parameters for all panels are average degree $\mu=15$, recovery rate $\tilde{\gamma}=0.5$, $N=5000$, and initial infected fraction $\tilde{\rho}=0.02$. The specific parameters for the four sets are detailed in the sub-captions: plots (a), (b), (c) and (d) use $\tilde{\beta}=0.05$, while plots (e), (f), (g) and (h) use $\tilde{\beta}=0.07$. $\tilde{\alpha}$ varies as shown. We have quantitatively evaluated our approximation using RMSE and Pearson Correlation Coefficient and results for the same are shown below.}
    \label{fig:full_comparison}
\end{figure}
\FloatBarrier 

\begin{figure}[htbp]
    \centering
    \begin{subfigure}[t]{0.48\linewidth}
        \centering
        \includegraphics[width=\linewidth]{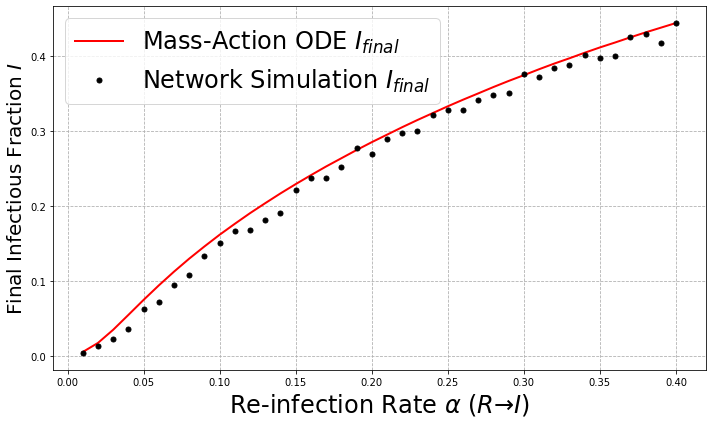}
        \caption{}
        \label{fig:final_fraction_alpha}
    \end{subfigure}
    \hfill
    \begin{subfigure}[t]{0.48\linewidth}
        \centering
        \includegraphics[width=\linewidth]{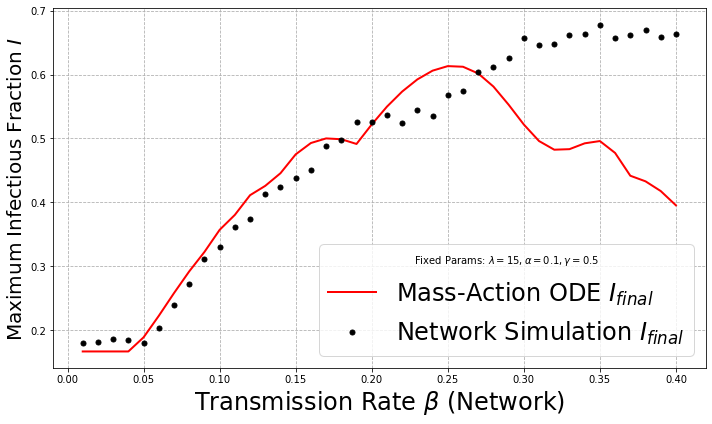}
        \caption{}
        \label{fig:final_fraction_beta}
    \end{subfigure}

    \caption{Steady-state analysis of the SIRI model: Comparison between the Mass-Action ODE (solid red line) and Network Simulation (black dots). Panel (a) shows the final infectious fraction as a function of the re-infection rate $\alpha$ with $\beta=0.05$. Panel (b) illustrates the infectious fraction relative to the network transmission rate $\beta$ with $\alpha=0.1$. Fixed parameters: $\lambda=6, , \gamma=0.5$.}
    \label{fig:final_analysis}
\end{figure}

    
\subsection*{Results Summary}
Across all four scenarios in \ref{fig:full_comparison}, the epidemic curve of mass-action SIRI ODE closely tracks the epidemic curve with network-based dynamics: S‑RMSE stays below 0.041 and correlation coefficients exceed 0.96 in all cases. We also observe that mass-action SIRI ODE accurately models the infectious fraction of the epidemic in a poisson network. This strong quantitative agreement, despite our use of a Taylor series closure in deriving the theorem, confirms that the mass-action approximation is valid under the stated conditions. 

Additionally we can see in Figure \ref{fig:final_analysis} and Figure \ref{fig:steady_state}, that the approximation holds with high precision for the final infectious ($I_{final}$) and recovered ($R$) fractions across varying re-infection and transmission rates. Notably, Figure \ref{fig:final_fraction_beta} demonstrates that while the ODE aligns perfectly with the network simulation at low transmission rates ($\beta$), a visible deviation emerges as $\beta$ increases, confirming our prediction. Similarly, Figure \ref{fig:steady_state} shows that for low enough infection rates, the steady-state recovered fraction stabilizes in close agreement with the ODE model, effectively capturing the transition at the epidemic threshold.

These results demonstrate that for $\tilde{\beta} \ll \tilde{\gamma}$ and matched initial conditions, the simpler ODE captures both the susceptible and infectious trajectories on Poisson networks with high fidelity. Table \ref{tab:results} below tabulates the results from our evaluation of RMSE and Pearson correlation coefficient on our approximation model.
\begin{figure}[!ht]
    \centering
    \includegraphics[width=0.5\linewidth]{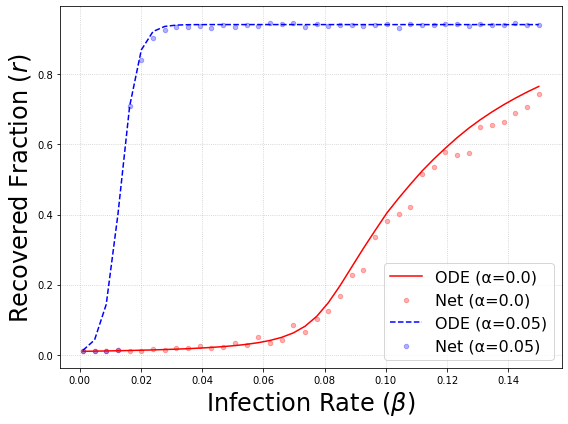}
    \caption{Steady-state Recovered Fraction ($r$) vs. Infection Rate ($\beta$) for different re-infection rates ($\alpha=0.0$ and $\alpha=0.05$). We take low enough infection rates for the approximation to work and steady state recovered fraction is measured when the epidemic stabilises.Fixed parameters: $\lambda=10,  \gamma=0.8$}
    \label{fig:steady_state}
\end{figure}

\begin{table}[H]
\centering
\caption{Quantitative comparison of network and ODE dynamics across four parameter sets. S-RMSE and I-RMSE denote the Root Mean Square Error for the susceptible and infectious fractions, respectively. Corr$_S$ and Corr$_I$ are the corresponding Pearson correlation coefficients.}
\label{tab:results}
\begin{tabular}{lcccccccc}
\toprule
\textbf{Run} & $\boldsymbol{\mu}$ & $\boldsymbol{\tilde{\beta}}$ & $\boldsymbol{\tilde{\alpha}}$ & $\boldsymbol{\tilde{\gamma}}$ & \textbf{S-RMSE} & \textbf{I-RMSE} & \textbf{Corr}$\boldsymbol{_S}$ & \textbf{Corr}$\boldsymbol{_I}$ \\
\midrule
1 & 15 & 0.05 & 0.20 & 0.5 & 0.0409 & 0.0146 & 0.9969 & 0.9902 \\
2 & 15 & 0.05 & 0.10 & 0.5 & 0.0259 & 0.0116 & 0.9986 & 0.9797 \\
3 & 15 & 0.07 & 0.07 & 0.5 & 0.0335 & 0.0141 & 0.9972 & 0.9657 \\
4 & 15 & 0.07 & 0.05 & 0.5 & 0.0265 & 0.0134 & 0.9980 & 0.9681 \\
\bottomrule
\end{tabular}
\end{table}

\subsection*{Impact of Re-infection rate($\tilde{\alpha}$)}
From the simulation results, we also see the effect of the reinfection rate $\tilde{\alpha}$. A separate simulation with the same parameters as Run-1 but with $\tilde{\alpha} = 0.5$ gave \textbf{S-RMSE} = 0.0636 and \textbf{I‑RMSE} = 0.0279. This shows us that a higher $\tilde{\alpha}$ leads to a poorer approximation and higher Root Mean Square Error. This is because the derivation of the epidemic curve equation, specifically the approximation of the network dynamics by the mass-action ODE, relies on a first-order Taylor series expansion and mean-field assumptions.
The approximation hinges on approximating the term $\frac{\tilde{\alpha} x_A}{\tilde{\beta}\mu x_D x_S}$ from \eqref{eq:xdot_s_curve} with $\frac{{\alpha} R}{{\beta} I S}$ from \eqref{eq:Sdot_curve} which involves the approximation $\frac{x_A}{x_D} \approx \frac{R}{I}$. 
The parameter $\tilde{\alpha}$ acts as a multiplicative factor for this approximated term. When $\tilde{\alpha}$ is large, it amplifies the error introduced by the approximation of the $\frac{x_A}{x_D}$ ratio, causing the network dynamics to diverge more significantly from the ODE solution. Essentially, a higher $\tilde{\alpha}$ weights the least accurate part of the mean-field approximation more heavily, thus resulting in the observed increase in the Root Mean Square Error (RMSE) and a poorer overall agreement between the network and ODE models.

\section{Conclusion}
In this work, we demonstrated that the classical mass-action SIRI model can effectively approximate the pairwise epidemic dynamics of a SIRI process on Poisson random networks under specific parameter mappings. By aligning key parameters—namely transmission, recovery, and reinfection rates—and ensuring equivalence in initial conditions, we derived a condition under which the trajectory of the susceptible population in the mass-action model closely mirrors that of the more detailed network-based pairwise model.

Our analytical derivation revealed that the approximation holds particularly well when the per-contact transmission rate is small relative to the recovery rate. Furthermore, we showed that the susceptible and infected curves predicted by the mass-action model can serve as reliable proxies for the corresponding network dynamics when moment closure assumptions are valid and the network structure is sufficiently random.

These findings have practical significance: they justify the use of simpler mass-action ODE models for forecasting and control in scenarios where full network information is unavailable or computationally expensive to simulate. Future work could investigate the robustness of this approximation under varying network topologies, clustering, or more complex disease dynamics such as multi-stage infections or adaptive behaviors.

\nocite{*}
\bibliographystyle{amsplain}
\bibliography{sn-bibiliography}
\end{document}